\DeclareMathOperator*{\argmax}{arg\,max}
\newtheorem{defi}{Definition}
\newtheorem{prop}{Proposition}
\newcommand{\sinr}{\textrm{SINR}} 
\newcommand{\mb}{\mathbf}
 \newcommand{\Csize}{L}
\long\def\comment#1{}
\newfont{\bbb}{msbm10 scaled 700}
\newfont{\bb}{msbm10 scaled 1100}
\newcommand{\Cc}{{\cal C}}
\newcommand{\Jc}{{\cal J}}
\newcommand{\Sc}{{\cal S}}
\newcommand{\Uc}{{\cal U}}
\begin{document}

\title{Harmonized Cellular and Distributed Massive MIMO: Load Balancing and Scheduling}

\author{\IEEEauthorblockN{Qiaoyang Ye$^*$, Ozgun Y. Bursalioglu$^\dagger$, Haralabos C. Papadopoulos$^\dagger$}

\IEEEauthorblockA{$\ ^*$Dept. of ECE, The University of Texas at Austin\\
$\ ^\dagger$Docomo Innovations Inc\\
Email: qye@utexas.edu, \{obursalioglu, hpapadopoulos\}@docomoinnovations.com.}}

\maketitle
\begin{abstract}
Multi-tier networks with large-array base stations (BSs) that are able to operate in the ``massive MIMO'' regime are envisioned to play a key role in meeting the exploding wireless traffic demands.
Operated  over small cells with reciprocity-based training, massive MIMO promises large spectral efficiencies per unit area with low overheads. Also, near-optimal user-BS association and resource allocation are possible in cellular massive MIMO HetNets using simple admission control mechanisms  and rudimentary BS schedulers, since scheduled user rates can be predicted \emph{a priori} with massive MIMO.

Reciprocity-based training naturally enables coordinated multi-point transmission (CoMP), as each uplink pilot inherently trains antenna arrays at all nearby BSs. In this paper we consider a  distributed-MIMO form of CoMP, which improves cell-edge performance without requiring channel state information exchanges among cooperating BSs. We present methods for harmonized operation of distributed and cellular massive MIMO in the downlink that optimize resource allocation at a coarser time scale across the network.  We also present scheduling policies at the resource block level which target approaching the optimal allocations. Simulations reveal that the proposed methods can significantly outperform the network-optimized cellular-only massive MIMO operation (i.e., operation without CoMP), especially at the cell edge.
\end{abstract}


\section{Introduction}\label{sec:introduction}
The exponential growth in wireless traffic is driving the densification of cellular networks.  Existing networks of carefully planned conventional macro base stations (BSs) are becoming transformed into dense irregular heterogeneous networks (HetNets), as they are  continuously supplemented with various types of BSs, differing in transmit power, physical size, and deployment cost~\cite{And7ways13}.

It has been well recognized that traditional user-BS association schemes are highly suboptimal for HetNets, due to the large disparities in BS transmit power~\cite{AndSin14}. Moreover, the non-uniform user distribution and irregular deployment of small BSs make load balancing critical.  
Various approaches have been used to investigate load balancing in HetNets \cite{AndSin14}, including stochastic geometry approach \cite{SinAnd12} and techniques from game theory \cite{AryKes13}. Some standardization efforts have also been made for load balancing in HetNets, e.g., in the form of \emph{cell range expansion} \cite{DamMon11}.

Several recent works \cite{YeAndLB13,YeAndABS13,BedAgr13,BetBur14a} recast  load balancing in HetNets as a network utility maximization (NUM) problem.
Paper \cite{YeAndLB13} studies the optimal user-BS association problem  in HetNets and shows  a great improvement in user rate distribution  via systematic load balancing.  Papers \cite{YeAndABS13,BedAgr13} consider the joint optimization of user association and BS muting -- referred to in 3GPP as enhanced intercell interference coordination (eICIC).

In parallel,  there is surging interest in equipping BSs with large antenna arrays.  With higher-frequency spectrum becoming available,  large arrays become feasible even for small cells, as more effective antennas can be packed into a small form factor\footnote{For example, at 5GHz,  $49$ antennas (arranged on square grid at half wavelength spacing) can be packed on a 20cm $\times$ 20cm antenna patch.}.  
By exploiting channel reciprocity, massive arrays can be trained in the uplink   (whether for uplink or downlink transmission) with low overheads \cite{Mar10}.
This enables very large spectral efficiencies per unit area  via massive MIMO, i.e., via serving simultaneously many users (although much fewer than antennas), each at a very high rate \cite{Mar10,HuhCai11,HoyTen13,LarEdf14}.  Attributes of massive MIMO can also be exploited to achieve near-optimal load balancing over massive MIMO HetNets using simple user-BS association methods with cellular transmission (where data for each user is transmitted from a single BS)  \cite{BetBur14a}.


In this paper we consider the use of coordinated multi-point transmission (CoMP) as a means for improving network performance, in particular, the cell-edge performance. CoMP is naturally enabled by reciprocity-based training, since  a single uplink pilot from a user terminal can train all nearby antennas. In regular cellular layouts with massive MIMO BSs, \cite{HuhCai11} shows gains to cell-edge users via CoMP.

In this paper we focus on a  distributed-MIMO form of CoMP, which does not require channel state information (CSI) exchanges among cooperating BSs and  allows us to  develop a systematic approach for allocating resources for CoMP and cellular  transmission. 
The methods we develop are based on formulating a NUM problem with respect to user association and resource allocation via extensions of the  framework for cellular transmission developed in \cite{BetBur14a}. We also present scheduling policies at the resource block (RB) level, which target approaching these optimized (coarser time scale) resource allocations. Simulations show that the proposed  harmonized CoMP/cellular operation can  provide significant gains with respect to cellular-only massive MIMO operation \cite{BetBur14a}, especially at the cell edge.

\section{System Model}\label{sec:model}
We consider the downlink of a cellular network comprised of $J$  BSs and $K$ single-antenna users.  We use $j\in \Jc=\{ 1,2, \ldots, J\}$ and $k\in \Uc=\{ 1,2, \ldots, K\}$ to index BSs and users, respectively.  We let $M_j$ denote the number of antennas at BS $j$ and assume $M_j\gg 1$. We assume time division duplex (TDD) operation with reciprocity-based CSI acquisition \cite{Mar10,HuhCai11}.  Hence, every BS antenna in the vicinity of user $k$ can estimate its downlink channel coefficient to user $k$ from the uplink pilot transmitted by user $k$.  This enables the training of large antenna arrays (e.g., $M_j \gg 1$) with pilot overheads 
proportional to number of simultaneously served users \cite{Mar10}. In contrast to feedback-based CSI acquisition, it also allows a user terminal to train {\em multiple nearby} BSs, and enables CoMP  without  additional training overheads.

Transmission resources are split into  slots or RBs, with each RB corresponding to a contiguous block of OFDM subcarriers and symbols. 
In any  given RB, we let $P_j$ denote the transmit power at BS $j$, and assume that this power is  equally split among all served user streams.
We assume a block-fading channel model where the user-BS channels remain constant within each RB \cite{caire2010multiuser,Mar10,HuhCai11,HoyTen13}.
We let  $\mathbf{g}_{kj}=\sqrt{\beta_{kj}}\, \mathbf{h}_{kj}$  denote the $M_j\times 1$ channel vector between BS $j$ and user $k$ on a generic RB, with the slow-fading scalar $\beta_{kj}$ characterizing the combined effect of distance-based path loss and location-based shadowing, and the vector  $\mathbf{h}_{kj}$ capturing fast fading. We assume that the vectors $\mathbf{h}_{kj}$'s are independent in $k$ and $j$, and that $\mathbf{h}_{kj}$ has  i.i.d.  $\mathcal{CN}(0,1)$  elements (independent Rayleigh fading).  We also assume that the thermal noise process at user $k$ is i.i.d. with $\mathcal{CN}(0,\sigma^2)$ samples.

\section{MIMO Transmission}
Within each RB, a subset of users are {\em active}, i.e., are scheduled for transmission.
The coded data for any given scheduled user can  be transmitted either from a single BS via \emph{cellular transmission}, or from multiple BSs via a CoMP mode referred to as  \emph{distributed MIMO transmission}. 

\subsection{Prior Art: Cellular Massive MIMO \cite{BetBur14a}}
In setting the stage for the distributed MIMO operation presented in this work, it is worth revisiting load balancing and scheduling for cellular massive MIMO, as considered in \cite{BetBur14a}. Let $S_j$ denote the maximum number of users served by BS $j$ on any given RB, with $S_j \ll M_j$. 
Under mild assumptions on fading,  the achievable user instantaneous rates on RB $t$, $r_{kj}(t)$,  can be predicted \emph{a priori} in the massive MIMO regime \cite{BetBur14a}. In particular, there exist deterministic quantities $\{r_{kj}\}$ such that  $r_{kj}(t) \stackrel{{\rm a.s.}}{\rightarrow} r_{kj}$, for all $k \in \Uc$ and $j \in \Jc$,  as $M_j, S_j \rightarrow \infty$, with fixed  $ \nu_j = S_j/M_j \geq 0$  \cite{Mar10,HoyTen13,HuhCai11}. This convergence  is very fast with respect to the $M_j$'s.

Letting  $\Sc_j(t)$ denote  the set of users served by BS $j$ on RB $t$ and $x_{kj} = \lim_{T \rightarrow \infty} \frac{| \{ t : k \in \Sc_j(t)\}| }{T}$  denote  the \emph{activity fraction} of user $k$  on BS $j$, the long-term averaged  throughput of user $k$ can be obtained via \cite{BetBur14a} 
\begin{equation} \label{throughput1}
R_k = \sum_{j \in \Jc} x_{kj} r_{kj},  \  \forall  k \in \Uc.
\end{equation} 

The advantages of the approach in \cite{BetBur14a} for cellular massive MIMO operation can be summarized as follows:
\begin{enumerate}[(A)]
\item The $r_{kj}$'s  are accurate  peak-rate proxies, which are independent of scheduled instances and user sets. 
\item User throughputs depend on activity fractions, via (\ref{throughput1}).  
\item The (combinatorial) user-cell association problem is recast as a (convex) NUM problem with respect to the $\{x_{kj}\}$ variables, subject to  resource constraints.
\item Any  $\{x_{kj}\}$ set not violating any resource constraints can be realized by a suitably designed scheduler.
\end{enumerate}

\subsection{CoMP via Distributed MIMO} 
The distributed MIMO scheme we consider corresponds to a form of CoMP that allows harvesting performance gains at the cell edge, with low operational overheads.
\begin{defi}\label{admissible-dmimo}{\bf Admissible Distributed MIMO Schemes:}
An admissible distributed MIMO scheme is a scheme that schedules transmissions for users on a sequence of RBs and, on each RB, satisfies the following:
\begin{enumerate}[(i)]
\item All the users served by a given BS $j$ are served in clusters of the {\em same} size $\Csize$ for some $\Csize\ge 1$.  
\item BS $j$ serves  at most $S_j(\Csize)$ users, for some fixed $S_j(L)$, satisfying $ S_j \le S_j(\Csize)\le \Csize S_j$. 
\item The user beams  (i.e., the precoding vectors) at BS $j$ are designed as if BS $j$ were engaging in cellular  MU-MIMO transmission over all the users it serves. 
\item All BSs serving a user transmit the same coded user stream. Each BS transmits the stream on a beam that is (independently) designed for the users at that BS.
\item $M_j \gg S_j(\Csize)$, for all $\Csize$ and $j$ considered.
\end{enumerate}
\end{defi}
We also assume that, within each RB, the transmit power at each BS is equally split among scheduled users.  


Table~\ref{sample-RBs} provides an example of a scheme complying with Defn.~\ref{admissible-dmimo},  assuming BS clusters of size 1 (cellular transmission) and 2. Four BSs are considered with $P_j=1$, $S_j(1)=S_j=2$, and $S_j(2)=3$.  As the table reveals, each BS on RB \#1 engages in cellular transmission. 
On RB \#2, BSs pairs jointly serve triplets of users. RBs \#3 and \#4 provide additional, more interesting,  modes.  No two users are served by the same BS cluster on RB \#3, while on RB \#4 BSs 1-2 jointly serve a triplet of users, while  BSs 3-4 serve users in cellular transmission. Note also that (at least) 8, 6, 6 and 7  uplink pilot dimensions are needed to enable RBs \#1, \#2, \#3 and \#4, respectively. Evidently,  the choice of scheduled user sizes, $S_j(\Csize)$,  signifies how aggressively pilot dimensions are reused across the network. 

\begin{table}
\caption{Example of RBs enabled by distributed MIMO over 4 BSs. } 
\vspace{-.2in}
\begin{center}
\begin{tabular}{|c||c|c|c|c|c|}
\hline
RB  & & BS 1 & BS 2& BS 3 & BS 4\\
\hline 
\hline
 & Cluster Size & 1 & 1 & 1 & 1\\
\hline
\#1 & User Power & 1/2 & 1/2 &  1/2 & 1/2\\
\hline 
& Served Users & 1,2 & 3,4 & 5,6 & 7,8 \\
\hline
\hline
 & Cluster Size & 2 & 2 & 2 & 2\\
\hline
\#2 & User Power & 1/3 & 1/3 &  1/3 & 1/3\\
\hline 
& Served Users & 1,2,3& 1,2,3 & 4,5,6& 4,5,6\\
\hline
\hline
 & Cluster Size & 2 & 2 & 2 & 2\\
\hline
\#3 & User Power & 1/3 & 1/3 &  1/3 & 1/3\\
\hline 
& Served Users & 1,2,3 & 1,4,5 & 2,4,6 & 3,5,6 \\
\hline
\hline
 & Cluster Size & 2 & 2 & 1 & 1\\
\hline
\#4 & User Power & 1/3 & 1/3 &  1/2 & 1/2\\
\hline 
& Served Users & 1,2,3 & 1,2,3 & 4,5 & 6,7\\
\hline

\end{tabular}
\end{center}
\label{sample-RBs}
\vspace{-.2in}
\end{table}

It is worth making a few remarks regarding the choice  of the distributed MIMO schemes of Defn.~\ref{admissible-dmimo}. First, the  schemes of Defn.~\ref{admissible-dmimo} provide the following CoMP benefits:
\begin{enumerate}[(i)]
\item  \textbf{Performance gains at the cell edge:} 
The beamforming (BF) gain provided by the cellular scheme (that  the distributed scheme is based upon) becomes intra-cluster BF gain in the distributed MIMO case, as the same coded data is transmitted from all BSs serving the user. Similarly, the intra-cell interference mitigation capabilities of the cellular scheme are extended across the cluster of BSs from which the user is served.  As a result, performance gains can be realized at the cell edge.
\item \textbf{Low training overheads:}  An uplink pilot  from a user terminal trains all nearby BS antennas, whether these are in one or many locations. Thus, CSI acquisition  between a user and nearby BSs need not  incur additional overheads with respect to cellular transmission.
\end{enumerate}

\noindent In addition, the schemes of Defn.~\ref{admissible-dmimo}  possess several important properties that are not in general present in  CoMP schemes:
\begin{enumerate}[(a)]
\item \textbf{Local precoding at each BS:} This is due to item (iii) in  Defn.~\ref{admissible-dmimo}.  For instance, in the case of linear zero-forcing beamforming (LZFBF), the beam for each user served by BS $j$  is chosen within the null space of the channels of all the other users served by BS $j$, no matter whether there are additional BSs serving the user on the same RB or not. 
\item \textbf{No need for CSI exchanges among BSs:}  Again, due to item (iii) in  Defn.~\ref{admissible-dmimo},  BS $j$ only needs CSI between the users it serves and the antennas of BS $j$ in order to generate the user beams at BS $j$.
\item \textbf{Flexible scheduling:}  The schemes of   Defn.~\ref{admissible-dmimo} enable user-specific BS-cluster transmission, and allow  serving users from overlapping but different clusters of BSs  on the same RB (see, e.g.,  RB \#3 in Table~\ref{sample-RBs}). 
\item \textbf{Simple predictors of instantaneous rates:} As shown in  \cite{HuhTul12}, the instantaneous user rates can also be predicted \emph{a priori} with CoMP. However, unlike general CoMP settings, where a user's instantaneous rate depends on the other users co-scheduled for transmission on the same RB  \cite{HuhTul12}, the schemes of Defn.~\ref{admissible-dmimo} make a user's instantaneous rate \emph{independent}  of the identities of the other users in the scheduling set. 
\end{enumerate}
 
As a result, the cellular-transmission attributes (A)--(C) exploited in  \cite{BetBur14a}  can be appropriately extended to allow resource allocation for  the schemes of Defn.~\ref{admissible-dmimo}, in the form of network-optimized activity fractions between users and clusters of BSs.  Although, as it turns out,  item (D)  is not always true with distributed MIMO, i.e., these activity fractions may not necessarily be realizable, as shown in Sec.~\ref{sec:scheduling-policies}, scheduling policies can be designed that may  approximate these fractions  in practice sufficiently well.

\section{Peak Rates and Scheduled Throughputs}\label{sec:analyze}
In this section, we develop proxy expressions for the instantaneous user rates and for the scheduled user throughputs that are provided by any  given scheduling policies enabling distributed MIMO transmission with either LZFBF or maximum ratio transmission (MRT). 

We consider a scheduling policy on RBs  $\{1, \, 2\, \cdots, T\}$  and assume that all the large-scale coefficients stay fixed within this period.  Any such scheduling policy can be described in terms of  the scheduling sets $\{ \Sc_\Cc(t); \ \  \forall \Cc, \  \forall t \in \{1, \, 2\, \cdots, T\} \}$, where $\Sc_{\Cc}(t)$ denotes the set of active users served by cluster $\Cc$ on RB $t$. Thus, the received signal at an active user $k\in \Sc_{\Cc}(t)$ on RB $t$ can be expressed by 
\begin{equation}\label{eq:receivesig}
\begin{aligned}
y_k(t) = &\underbrace{\sum_{j\in \Cc} \sqrt{\frac{P_{j}}{N_{j}}} \mb{g}_{kj}^H \mb{f}_{kj}s_k }_\text{desired} + \underbrace{ \sum_{j\in \Cc} \sum_{\substack{u\in \Sc_{\Cc}(t)\\ u\neq k}} \sqrt{\frac{P_{j}}{N_j}} \mb{g}_{kj}^H \mb{f}_{uj}s_u }_\text{intra-cluster interference} \\
+&\underbrace{\sum_{l\notin \Cc} \sum_{u\in \cup_{(\mathcal{C'}: l\in \mathcal{C'})} \Sc_\mathcal{C'}(t)} \sqrt{\frac{P_l}{N_l}} \mb{g}_{kl}^H \mb{f}_{ul}s_u }_\text{inter-cluster interference} + \underbrace{w_k}_\text{noise},
\end{aligned}
\end{equation}
where $\Cc$ denotes the cluster (set) of BSs serving user $k$ on RB $t$, $\mathcal{C'}$ denotes the cluster including BS $l$, $s_u$ denotes the unit-power stream for user $k$, and $\mathbf{f}_{uj}$ denotes the unit-norm precoding vector  for user $u$ at BS $j$.

Let $r_{k\Cc}$ denote the peak rate of user $k$ from BS cluster $\Cc$. It can be shown using the techniques in \cite{LimCha13,HuhTul12} that, with distributed MIMO based on LZFBF, $r_{k\Cc}$ is given by\footnote{Expression~(\ref{eq:rate-zf}) assumes that  $\forall j\in \Cc$, BS $j$ serves $S_j(\Csize)$ users, each user at power $P_j/S_j(\Csize)$. In the case that fewer users are served by one of the BSs, the LHS in (\ref{eq:rate-zf}) represents an achievable (lower bound) rate.}  
\begin{equation}\label{eq:rate-zf}
r_{k\Cc} \!= \!\log_2\!\left(\!\!1\!+\!\frac{\displaystyle \sum_{j\in \Cc} \sum_{\ell\in \Cc} \!\!\sqrt{\!P_{j}P_{\ell}\beta_{kj}\beta_{k\ell}b_{j}(|\Cc|) b_{\ell}(|\Cc|)}}{\sigma^2 + \sum_{ \ell\notin \Cc} P_{\ell} \beta_{k\ell}}\!\right)\!\!,
\end{equation}
where $b_{j}(L)=\frac{M_{j} - S_{j}(L)+1}{S_{j}(L)}$.
Similarly, for the case that the distributed MIMO transmission is based on MRT, 
\begin{equation}\label{eq:rate-mrt}
r_{k\Cc} \!=\! \log_2\!\left(\!1 + \frac{\sum_{j\in \Cc}\sum_{\ell \in \Cc}\sqrt{\frac{P_j P_\ell M_j M_\ell\beta_{kj}\beta_{k\ell}}{S_{j}(|\Cc|)\, S_{\ell}(|\Cc|)}} }{\sigma^2 + I_{k\Cc}+ \sum_{\ell\notin \Cc} P_\ell \beta_{k\ell}} \!\right)\!,
\end{equation}
where $I_{k\Cc}\!=\!\sum_{ \!j\in \Cc}\! \frac{S_{j}(\!|\Cc\!|\!)\!-\!1}{S_{j}(|\Cc|)} \!P_j\!\beta_{kj} $ is intra-cluster interference. For completeness, we provide the proof of (\ref{eq:rate-zf}) and (\ref{eq:rate-mrt}) in Appendixes \ref{pf:theo-zf} and \ref{pf:theo-mrt}, respectively.

Similar to cellular massive MIMO in \cite{BetBur14a}, the long-term user throughput  with the admissible distributed MIMO  schemes of Defn.~\ref{admissible-dmimo} can be expressed in terms of the distributed MIMO peak rates and the activity fractions provided by the scheduling policy.
In the limit $T\to \infty$, the throughout of user $k$  can be expressed as\footnote{Convergence to the limiting expressions of interest is very quick \cite{BetBur14a}.}
\begin{equation}
\label{avgtp-vs-actfracs}
R_k = \sum_\Cc x_{k\Cc}\, r_{k\Cc},
\end{equation}
where $x_{k \Cc} = \lim_{T\to \infty} \frac{|\{t: 1\le t\le T; \ k \in \Sc_\Cc(t)\}| }{T}$ is the  activity fraction of user $k$ with respect to cluster $\Cc$.

\begin{figure*}[tp]
\centering
\setcounter{subfigure}{0}
\subfigure[Illustration of user location.]{\label{fig:location-toyeg}\includegraphics[width=6.2cm, height=5.8cm]{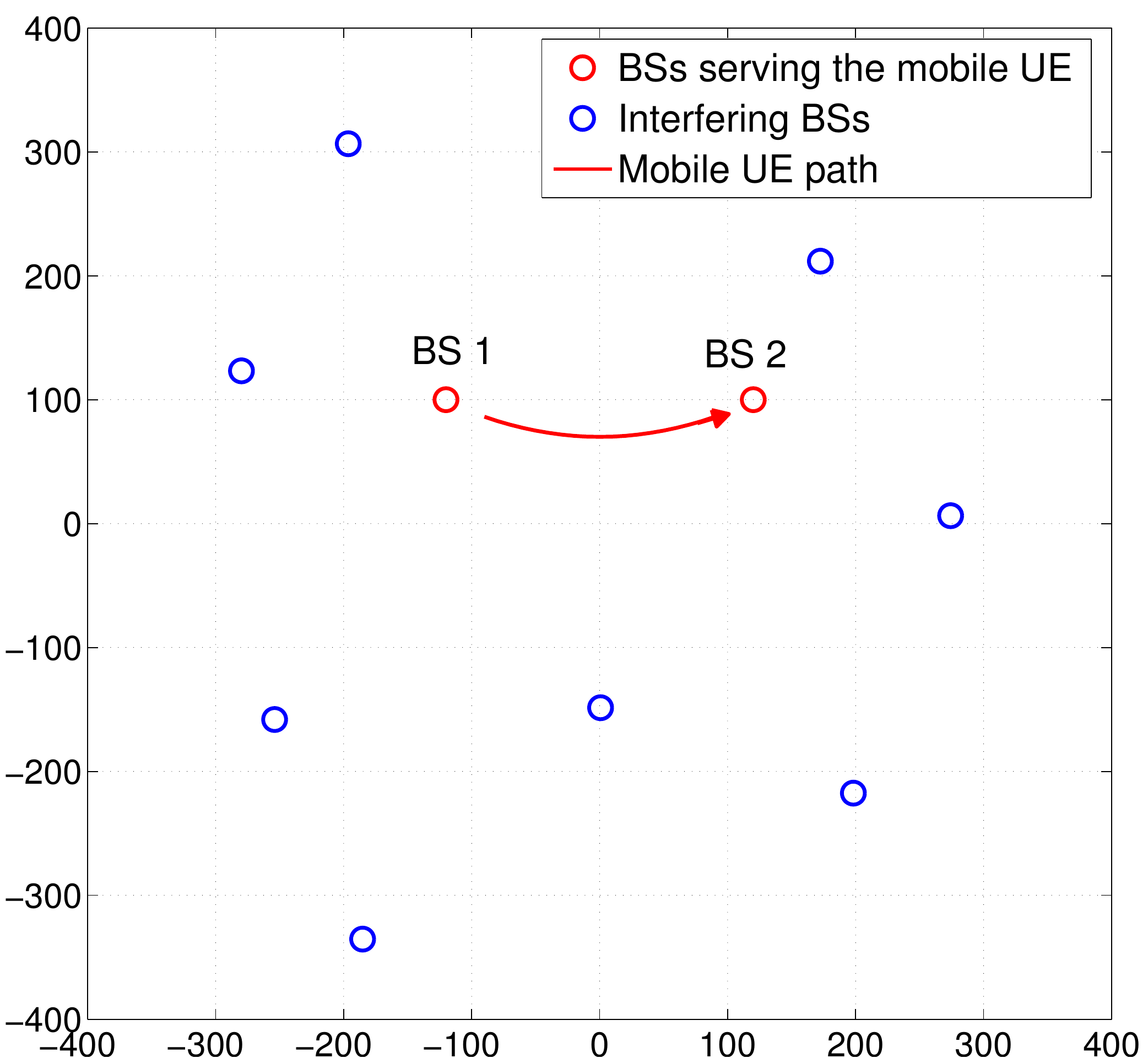}}
 \hspace{0.4in}
\subfigure[Instantaneous rate vs. x-axis coordinate in (a).]{\label{fig:rate-toyeg}\includegraphics[width=6.5cm, height=5.8cm]{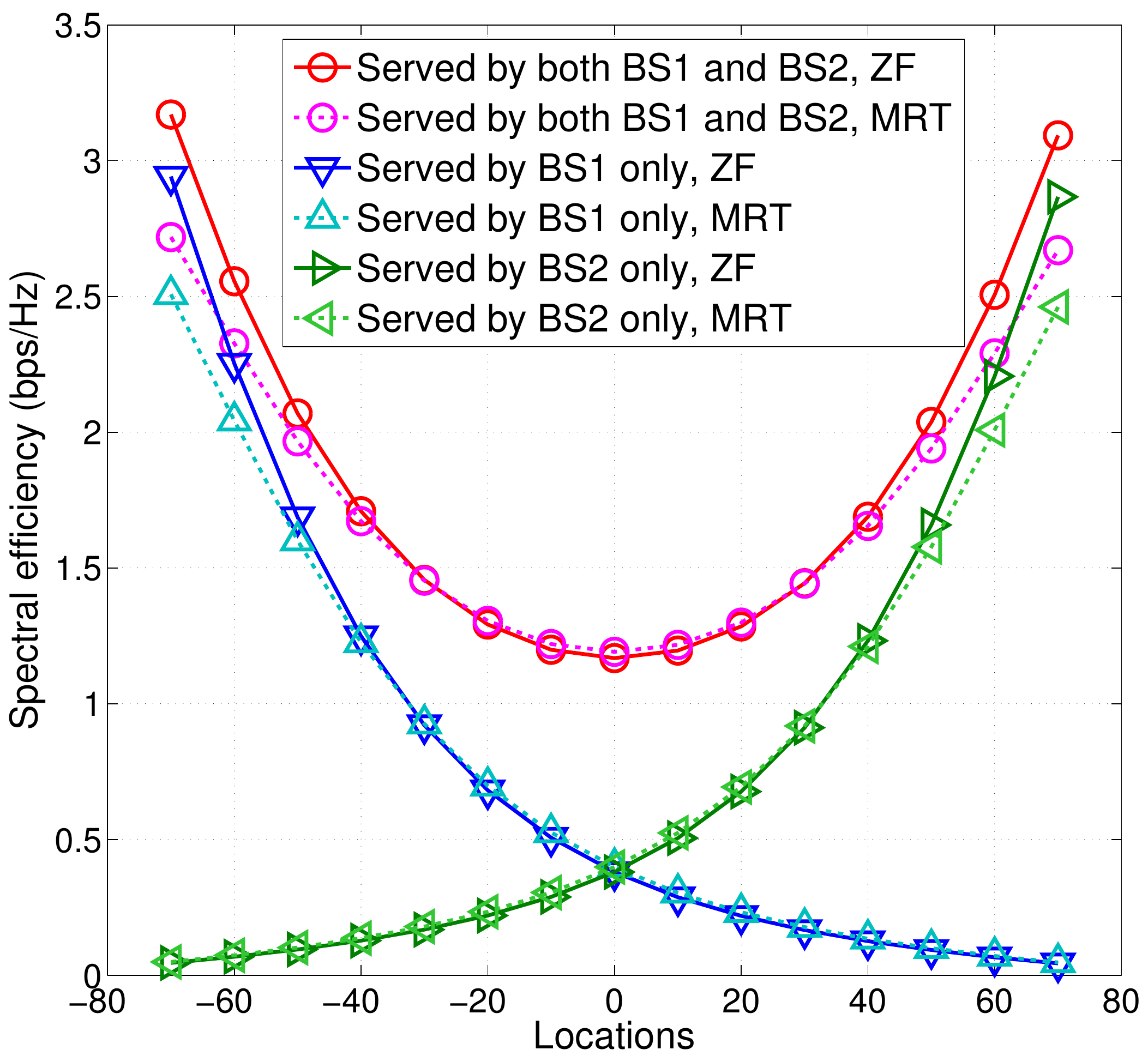}}
\setlength{\abovecaptionskip}{0pt}
\caption{Illustration of spectral efficiency versus user locations. The location in Fig. \ref{fig:location-toyeg} indicates the x-axis coordinate of the path in Fig. \ref{fig:location-toyeg}.  The cell edge users (e.g., the areas near the origin) benefit from distributed MIMO.}
\label{fig:toyeg}
\end{figure*}
%
Fig.~\ref{fig:toyeg} shows an example of the potential benefits that distributed MIMO transmission can offer, by showing  the peak rates of cellular vs pair-BS distributed MIMO transmission as a function of user location. 
When the user is close the transmitting BS cellular transmission is as good as anything. On the other bend when the user is close to the cell edge between BSs 1 and~2,   distributed MIMO transmission from the cluster  $\{1,2\}$ yields about 3 times higher rates than cellular transmission.  In the next section we formally consider the problem of allocating resources to users across BSs or clusters of BSs so as to optimize the network-wide system performance.

\section{User-Cluster Association as NUM}
In this section, we formulate the user-cluster association problem as a NUM  of activity fractions across all users, to optimize a network-wide utility function capturing the operator's notion of (inherently subjective) fairness.

Before formulating the NUM problem, it is worth restricting the domain of scheduling options in order to obtain solutions that are of practical interest. We focus on cluster sizes $\Csize \in \{1, \, 2, \cdots, \Csize_{\max} \}$ for some appropriately chosen maximum\footnote{The choice of $\Csize_{\max}$ is a design choice. It depends on the average number of nearby BS arrays that users typically see and the complexity that can be afforded. In our simulations, we set $\Csize_{\max}=4$.}  cluster size,  $\Csize_{\max}$. Motivated by the example in Table~\ref{sample-RBs}, we consider the following architecture. 

\begin{defi}\label{first-arch}
{\bf Uniform Cluster-Size Architecture (UCS)}: 
A scheme from Defn.~\ref{admissible-dmimo} is a UCS architecture, if for each
 $\Csize\in \{1,\,2,\,\ldots,\, \Csize_{\rm max}\}$, a $\lambda_\Csize\ge 0 $ fraction of the RBs is allocated to serving size-$L$ clusters, and if on any RB from this $\lambda_\Csize$ fraction the following are satisfied:

\noindent (i) each scheduled user is served by a (user-dependent) cluster of $\Csize$ BSs;

\noindent (ii) for each $j\in \Jc$, BS $j$  serves no more than $S_j(\Csize)$ users.
\end{defi}
In the UCS architecture, users served by different-size clusters are scheduled on distinct RBs.  For the example in Table~\ref{sample-RBs}, such an architecture enables scheduling policies with RBs of types \#1, \#2, and \#3, but not of  type \#4.  

The NUM subject to the UCS architecture is 
\begin{subequations}\label{NUM-first}
\begin{align}
\max\limits_{\lambda_{\Csize}, x_{k\Cc}} \    & \sum_{k\in\mathcal{U}} \ U\!\left( \sum_{\Cc:\ |\Cc|\le \Csize_{\rm max}} x_{k\Cc} \, r_{k\Cc}  \right) \label{obj-NUM-first}\\
\text{s.t. } & \!\!\! \sum_{
\substack{
\Cc: \, j\in \Cc\\  |\Cc|=\Csize}} \, \sum_{k\in\mathcal{U}} \! x_{k\Cc}  \leq  \lambda_\Csize S_j(\Csize), \, \forall j, \, \Csize \!\le\! \Csize_{\max},  \label{ct-cluster-NUM-first}\\
& \sum_{\Cc:\ |\Cc| = \Csize}\!\!\! \!\! x_{k\Cc} \leq \lambda_\Csize, \ \forall k\in\mathcal{U}, \  \Csize \le \Csize_{\max}, \label{ct-ue-NUM-first}\\
& x_{k\Cc} \geq 0,\ \  \forall k\in\mathcal{U},\, \forall \Cc \, \text{with}\, |\Cc| \le \Csize_{\max}, \label{ct-x-NUM-first}\\
&\sum_{\Csize=1}^{\Csize_{\max}} \lambda_\Csize \leq 1, \label{ct-sumArch-NUM-first}\\
& \lambda_\Csize \geq 0, \ \ \  \forall \Csize \le \Csize_{\max}.\label{ct-Arch-NUM-first}
\end{align}
\end{subequations}
Ineq.~(\ref{ct-cluster-NUM-first}) signifies that the total activity fractions  of users served by BS $j$  in clusters of size $\Csize$ cannot exceed the product of available RBs and the maximum number of beams that can be spatially multiplexed at BS $j$ in clusters of size $\Csize$. 
Ineq.~(\ref{ct-ue-NUM-first}) signifies that the fraction of RBs over which user $k$ is served by clusters of size $\Csize$ cannot exceed the RB fraction allocated to size-$\Csize$ clusters.

It is easy to verify  that (\ref{NUM-first}) is a convex optimization problem.  Also note that, for $\Csize_{\max} =1$,  (\ref{NUM-first}) specializes to  the cellular massive MIMO NUM problem studied in \cite{BetBur14a}.

The second architecture we consider also allows serving users  of the type of RB \# 4. 
\begin{defi}\label{second-arch}
{\bf Mixed Cluster-Size Architecture (MCS)}: 
A scheme from Defn.~\ref{admissible-dmimo} is a MCS architecture, if 
 a $\lambda_\Csize\ge 0 $ fraction of the RBs is allocated  $\forall \Csize\in \{2,\,3,\,\ldots,\, \Csize_{\rm max}\}$, and if within any RB that is part of the  $\lambda_\Csize$ fraction the following are satisfied: (i) each scheduled user is served either in cellular mode, or by a (user-dependent) cluster of $\Csize$ BSs; (ii)  $\forall j\in \Jc$,  BS $j$ serves either at most $S_j$ users all in cellular mode, or at most $S_j(\Csize)$ users, all served in clusters of size $\Csize$.
\end{defi}
A convex NUM problem  analogous to (\ref{NUM-first}) can be formulated for the MCS architecture.

\section{Scheduling Policies for NUM Solution} \label{sec:scheduling-policies}
In this section, we investigate scheduling policies that yield  $\{x_{k\Cc}\}$ closely matching the solution of (\ref{NUM-first}).

\begin{defi}\label{feas-sched-one} {\bf Feasible Schedule:}
A scheduling policy 
$\{ \Sc_\Cc(t); \ \  \forall \Cc, \text{ with } |\Cc|\le \Csize_{\rm max},   \forall t \in \{1, \, 2\, \cdots, T\} \}$ is feasible with respect to the UCS architecture of Defn.~\ref{first-arch} if it satisfies the following: 
\begin{enumerate}[(i)]
\item For each $t$, the policy associates with RB $t$ a single cluster size, $\Csize(t)$, for some $\Csize(t)\le \Csize_{\max}$, i.e., for each $\Cc$ for which  $\Sc_\Cc(t)$ is non-empty,  $|\Cc|=\Csize(t)$.
\item For each $t$, each user is served by at most one cluster; that is, $|\sum_\Cc\,1\{ k\in  \Sc_\Cc(t)\}|\le 1$ for all $k\in \Uc$.
\item For each $t$,  and for each  $j\in \Jc$, BS  $j$ serves at most $S_{j}(\Csize(t))$ users; that is,
$| \cup_{\Cc:\, j\in \Cc} \Sc_\Cc(t)| \le S_{j}(\Csize(t))$.
\end{enumerate}
\end{defi}
It is easy to verify that any feasible schedule yields activity fractions that satisfy  (\ref{ct-cluster-NUM-first})--(\ref{ct-Arch-NUM-first}).   



For instance, in a network of $3$ BSs,  with $\Csize_{\rm max}=2$ and $S_j(2)=3, \forall j$, no feasible schedule yields $\{x_{k\Cc}\}$  with $\lambda_2>0$, for which (\ref{ct-cluster-NUM-first}) is satisfied with equality for all $j$ and $\Csize=2$.  This is because it is impossible to {\em simultaneously} schedule $3$ users  at {\em all} three BSs: at most two  BSs can schedule 3 users, while the 3rd would necessarily schedule at most 2 users (i.e., the 3 BSs schedule a total of 4 users, each receiving beams from a BS pair).  Clearly, any feasible schedule results in at least one {\em strict} inequality in (\ref{ct-cluster-NUM-first}). 
Hence, the coarser time-scale NUM problem (\ref{NUM-first}) does not capture the finer time-scale  constraints associated with feasible schedulers. Although, in general, (\ref{NUM-first}) provides an upper bound on the network performance, as we show next,   using activity fractions that are the solution to (\ref{NUM-first}), we can design scheduling policies, whose performance is close to the utility provided by the solution to~(\ref{NUM-first}).


\subsection{Virtual Queue Based Scheduling Scheme}\label{sec:vq-scheme}
 As in \cite{YeAndLB13,BetBur14a}, we focus on the proportional fair utility (i.e.,   $U(x)=\log(x)$ in (\ref{obj-NUM-first})) in the rest of this paper.
We consider scheduling policies for the UCS architecture  comprised of $\Csize_{\max}$ parallel schedulers, one per  cluster size $\Csize\in \{1,\, 2,\, \cdots, \Csize_{\rm max}\}$.  
We describe a method for scheduling users over the RBs from the $\lambda_\Csize>0$ fraction of RBs dedicated to clusters of size $\Csize$.

We first remark, that as in the cellular settings \cite{YeAndLB13,BetBur14a}, empirical evidence reveals that, in a ``loaded'' network, most users  are uniquely associated to a single cluster per cluster size, i.e.,  for most user indices $k$, there is a single nonzero $x_{k\Cc}$  among all  $\Cc$'s  with the same  $|\Cc|$.

Insight regarding this observation can be obtained by examining \emph{Karush-Kuhn-Tucker} (KKT) conditions of (\ref{NUM-first}), which imply
\begin{equation}\label{eq:kkt-1}
\sum_{\Csize'}  \sum_{\Cc: \, |\Cc| = \Csize'} x_{k\Cc} r_{k\Cc} \geq \frac{r_{k\Cc}}{ \sum_{j} \nu_{j\Csize} +  \mu_{k\Csize}},
\end{equation}
where $\nu_{j\Csize}$ and $\mu_{k\Csize}$ are the Lagrange multipliers corresponding to (\ref{ct-cluster-NUM-first}) and (\ref{ct-ue-NUM-first}), respectively.

In a  loaded network, where the constraints (\ref{ct-ue-NUM-first}) are inactive (i.e., $\sum_{\Cc:\, |\Cc|= \Csize} x_{k\Cc}< \lambda_\Csize$ $\forall k\in \mathcal{U}$), we have the following:
\begin{prop}\label{prop:uniqueass}
If (\ref{ct-ue-NUM-first}) are inactive $\forall k\in\mathcal{U}$, the number of users that are served by multiple clusters of size $\Csize$ is at most $N_\Csize -1$, where $N_\Csize$ is the number of size-$\Csize$ clusters.
\end{prop}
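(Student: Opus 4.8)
The plan is to read off structural information about the optimal solution from the KKT conditions and then convert it into a graph-theoretic counting bound. First I would form the Lagrangian of~(\ref{NUM-first}) with $U(x)=\log(x)$, attaching multipliers $\nu_{j\Csize}\ge0$ to~(\ref{ct-cluster-NUM-first}) and $\mu_{k\Csize}\ge0$ to~(\ref{ct-ue-NUM-first}). Stationarity in $x_{k\Cc}$ for a cluster with $|\Cc|=\Csize$ gives $r_{k\Cc}/R_k-\sum_{j\in\Cc}\nu_{j\Csize}-\mu_{k\Csize}\le0$, with equality whenever $x_{k\Cc}>0$; this is the content of~(\ref{eq:kkt-1}). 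Since~(\ref{ct-ue-NUM-first}) is assumed inactive for every $k$, complementary slackness forces $\mu_{k\Csize}=0$, so on the support I obtain the clean factorized identity
\begin{equation}\label{eq:plan-fact}
r_{k\Cc}=R_k\,\Psi_\Cc,\qquad \Psi_\Cc\eqdef\sum_{j\in\Cc}\nu_{j\Csize},
\end{equation}
valid for every pair $(k,\Cc)$ with $|\Cc|=\Csize$ and $x_{k\Cc}>0$.

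Next I would encode the support as a bipartite graph $\mathcal{G}_\Csize$ whose left vertices are the users with $\sum_{|\Cc|=\Csize}x_{k\Cc}>0$, whose right vertices are the $N_\Csize$ size-$\Csize$ clusters, and which has an edge $(k,\Cc)$ exactly when $x_{k\Cc}>0$. The users ``served by multiple clusters of size $\Csize$'' are precisely the left vertices of degree at least two; call their number $m$. The decisive claim is that $\mathcal{G}_\Csize$ is a \emph{forest}. Granting this, a short count finishes the proof: if $U'$ is the number of active users (left vertices) and $N'\le N_\Csize$ the number of clusters actually used, a forest satisfies $|E|=U'+N'-C\le U'+N'-1$ with $C\ge1$ components, while summing degrees gives $|E|=\sum_k d_k\ge (U'-m)+2m=U'+m$. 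Combining, $U'+m\le U'+N'-1$, hence $m\le N'-1\le N_\Csize-1$, which is the assertion. Note this bound is strictly sharper than the generic ``vertex'' count $m\le J$, so the acyclicity is genuinely doing the work.

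The main obstacle is establishing acyclicity of $\mathcal{G}_\Csize$. My plan here is to use~(\ref{eq:plan-fact}): along any cycle $k_1-\Cc_1-k_2-\cdots-k_p-\Cc_p-k_1$ in the bipartite graph, forming the alternating product of~(\ref{eq:plan-fact}) across its edges makes every $\Psi_{\Cc_i}$ cancel within each ratio and every $R_{k_i}$ cancel telescopically, leaving the algebraic relation $\prod_{i}\big(r_{k_i\Cc_i}/r_{k_{i+1}\Cc_i}\big)=1$ among the (fixed) peak rates. Since the peak rates are determined by the geometry through~(\ref{eq:rate-zf}) or~(\ref{eq:rate-mrt}), such a multiplicative coincidence fails for all but a measure-zero set of large-scale coefficients, so generically $\mathcal{G}_\Csize$ carries no cycle.

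To avoid relying on genericity, the alternative I would pursue is an exchange argument: among all feasible allocations attaining the optimal throughput vector $\{R_k\}$, pick one of minimal support. Because the objective depends on $x$ only through the $R_k$'s, any cycle in $\mathcal{G}_\Csize$ supplies a direction $d$ (alternating $\pm$ along the cycle, scaled to keep each $R_k$ fixed by~(\ref{eq:plan-fact})) along which the objective is constant, so moving in $\pm d$ until some $x_{k\Cc}$ hits zero strictly reduces the support and contradicts minimality. The delicate point---and where I expect the real work to lie---is checking that this move stays feasible for the \emph{coupled} per-BS capacity constraints~(\ref{ct-cluster-NUM-first}): unlike a pure transportation polytope, a single cluster touches $\Csize$ distinct BS constraints, so I must verify that at least one of $\pm d$ leaves every tight constraint in~(\ref{ct-cluster-NUM-first}) satisfied before the support can be reduced. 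Resolving this feasibility bookkeeping is the crux; once acyclicity is secured, the counting above delivers $m\le N_\Csize-1$.
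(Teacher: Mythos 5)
Your proposal is correct, and its core engine is the same as the paper's: use KKT stationarity plus complementary slackness (with $\mu_{k\Csize}=0$ because (\ref{ct-ue-NUM-first}) is inactive) to obtain the factorization $r_{k\Cc}=R_k\sum_{j\in\Cc}\nu_{j\Csize}$ on the support, observe that any cycle in the support structure would force a multiplicative identity among peak rates that holds with probability zero, and then count edges of an acyclic graph. Where you differ is the graph-theoretic packaging, and your version is arguably tighter. The paper works with a user graph $G_1$ (edges representing shared clusters), does explicit case analysis for two and three users (its relation (\ref{eq:kkt-2user}) is exactly your cycle identity in the shortest case), and then passes to a derived cluster graph $G_2$, claiming $G_2$ is a tree and that multiply-served users correspond to its edges. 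Your single bipartite user--cluster incidence graph subsumes all of this at once: the alternating product along an arbitrary cycle collapses by telescoping, and the forest bound $|E|\le U'+N'-1$ combined with the degree count $|E|\ge U'+m$ gives $m\le N_\Csize-1$ directly. Your encoding also handles cleanly a configuration the paper's glosses over: a single user served by three or more clusters is a star (acyclic) in your bipartite graph, but creates a triangle in the paper's $G_2$, contradicting its literal ``no loop in $G_2$'' claim even though the final bound still holds there. Two caveats. First, your genericity step sits at exactly the paper's level of rigor: both of you assert without proof that the rate coincidence is a measure-zero event, which strictly speaking needs a union bound over the finitely many possible cycles and non-degeneracy of the rate maps (\ref{eq:rate-zf})--(\ref{eq:rate-mrt}) as functions of the $\beta_{kj}$'s. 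Second, your proposed genericity-free exchange argument is, as you yourself flag, incomplete --- the cycle direction need not preserve feasibility of the coupled per-BS constraints (\ref{ct-cluster-NUM-first}) --- but since you do not rely on it, your proof stands on the KKT route alone.
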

\begin{proof}
See Appendix~\ref{pf:prop-uniqueass}.
\end{proof}

Given the limited number of fractional users per cluster size $\Csize$, the scheduler approximates the optimal  $\{x_{k\Cc}\}$ by unique association activity fractions,   $\{\tilde{x}_{k\Cc}\}$, given by
\begin{equation}
\label{xtil-kC}
 \tilde{x}_{k\Cc}=
\begin{cases} 
 x_{k\Cc} & \text{if  $\Cc = \Cc^*(k) $}\\
 0 &  \text{otherwise} \end{cases},
\end{equation}
with $
\Cc^*(k) = \argmax_{\Cc: \ |\Cc|=L}  x_{k\Cc}$.

Letting $\mathcal{U}_\Cc$ denote the users for which $\tilde{x}_{k\Cc}>0$, we  have  $\mathcal{U}_\Cc\cap\mathcal{U}_{\Cc'}=\emptyset$  for all   $\Cc\neq \Cc'$ with $|\Cc|=|\Cc'|$. We also let $\mathcal{U}^{(\Csize)} = \cup_{\Cc:\, |\Cc|=\Csize} \, \mathcal{U}_{\Cc}$ denote the set of users that receive non-zero activity from clusters of size $\Csize$.  

To assign user $k$ a fraction of RBs close to the desired fraction $\alpha_k =\tilde{x}_{k\Cc}/\lambda_\Csize$, we consider a max-min scheduling  policy based on virtual queues (VQ), which assumes user $k$ receives rate $\tilde{R}_{k} = 1/\alpha_k$ when user $k$ is scheduled for transmission over cluster $\Cc^*(k)$ (i.e., $k\in \Sc_{\Cc^*(k)}(t)$). 
The  cluster-size $\Csize$  scheduler performs at each $t$ a weighted sum rate maximization (WSRM) of the form \cite{ShiCai10}:
\begin{subequations}\label{VQ-C}
\begin{align}
\max_{\tilde{\Uc} \subseteq \Uc^{(\Csize)}}\ & \ \sum_{ k \in \tilde{\Uc}  } Q_k(t)\tilde{R}_{k},
\label{VQ-C-WSRM}
\\
\text{s.t. } \ \ \  &  \  \   \sum_{k\in \tilde{\Uc}} 1\{j\in \Cc^*(k)\} \le S_{j}(\Csize), \  \ \forall j\in \Jc,
\label{VQ-C-constraints}
\end{align}
where the weight of user $k$ at time $t$, $Q_k(t)$, is  the VQ length of user $k$ at time $t$. For max-min fairness \cite{ShiCai10},   $Q_k(t)$ is updated as follows: 
\begin{equation}
Q_k(t+1) = \max\{0, Q_k(t) - \tilde{R}_k(t)\} + A_k(t),
\end{equation}
where
\begin{equation}
\tilde{R}_k(t) = \begin{cases} 
 \tilde{R}_k  & \text{if user $k$ is scheduled at time $t$}\\
 0 &  \text{otherwise} \end{cases},
 \end{equation}
\begin{equation}
A_k(t)  = \begin{cases} 
 A_{\rm max}  & \text{if $V > \sum_k Q_k(t)$} \\
 0 &  \text{otherwise} \end{cases},
 \end{equation}
\end{subequations}
with $A_{\max}$ and $V$ chosen sufficiently large \cite{ShiCai10}. 
Note that in the absence of  constraints (\ref{VQ-C-constraints}), the max-min scheduler  (\ref{VQ-C})  schedules user $k$ the desired fraction of RBs, $\alpha_k$.


Scheduling via (\ref{VQ-C}) is impractical, as it amounts to solving for each RB $t$ an integer linear program of the form (\ref{VQ-C-WSRM})--(\ref{VQ-C-constraints}). A number of heuristic algorithms can be used to provide feasible (though generally suboptimal) solutions to (\ref{VQ-C}).  In this paper, we consider a rudimentary greedy algorithm. Letting $K_\Csize= |\Uc^{(\Csize)}|$ be the total number of users to be served by clusters of size $\Csize$, the greedy algorithm for size-$L$ clusters at time $t$ operates as follows:
\begin{enumerate}[1.]
\item Determine a user order $\pi(k)$, where  $Q_{\pi(k)}(t) \tilde{R}_{\pi(k)} \ge Q_{\pi(k+1)}(t) \tilde{R}_{\pi(k+1)}$  for all $k\in\Uc^{(\Csize)}$.
\item Initialization: $k=1$, and $\tilde{\Uc}=\emptyset$.
\item  If the user set $\tilde{\Uc} \cup \{\pi(k)\}$ satisfies all the constraints in (\ref{VQ-C-constraints}), set $\tilde{\Uc} =\tilde{\Uc} \cup \{\pi(k)\}$.
\item If $k< K_\Csize$,  set $k=k+1$ and go to step 3.
\item Output $\tilde{\Uc}$ as the scheduling user set  at time $t$.
\end{enumerate}

\section{Performance Evaluation}\label{sec:simulation}
In this section, we present a brief simulation-based evaluation of the proposed distributed MIMO schemes based on the ``wrap-around'' checkerboard layout in Fig.~\ref{fig:network}. There are 4 macros with $M_j=100$ and $S_j(L)=10L$,  and 32 pico BSs  with  $M_j=40$ and $S_j(L)=4L$.  One pico BS is at the center of each white square, while 3 pico BSs are dropped uniformly within each shaded square.  Also, 15 and 90  single-antenna users are dropped uniformly in each white and each shaded square, respectively.  The macro and pico BS transmit powers are 46dBm and 35dBm, respectively. The path-loss for  macro-user links and pico-user links are $128.1+37.6\log_{10}d$ and $140.7+36.7\log_{10}d$,  respectively, with the distance $d$ in km.  

We consider two distinct macro-pico operation scenarios: (i) macros and picos operate on the same band, with  cluster sizes up to $\Csize_{\rm max}=4$; (ii) macros are given  20\% of the RBs  for cellular transmission, and picos are given  the remaining 80\%  for distributed MIMO with $\Csize_{\rm max}=4$.

%
%

\begin{figure}
\centering
\includegraphics[width=6.8cm, height=6.8cm]{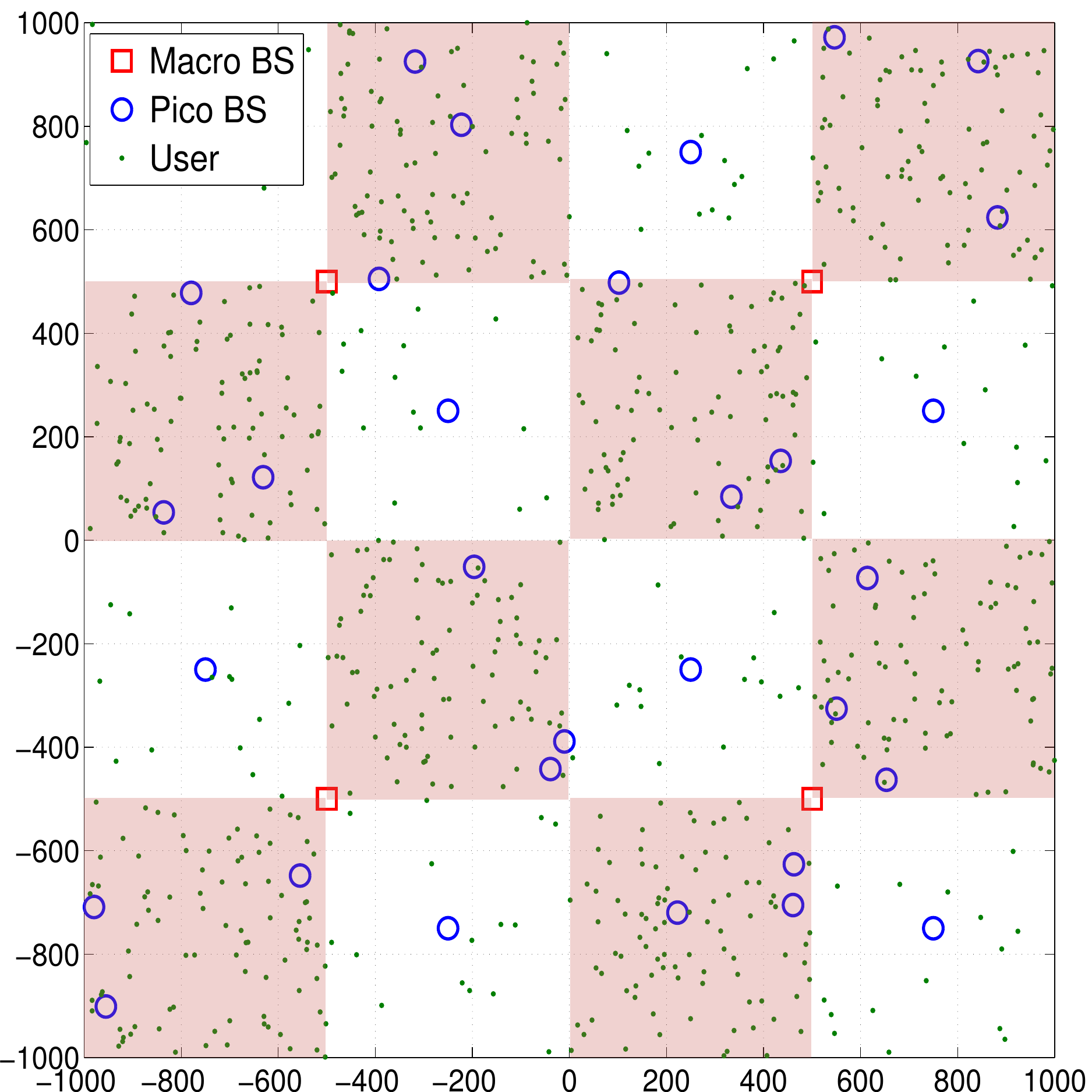}
\caption{A 2000m $\times$ 2000m network with 4 macro 32 pico BSs.}
\label{fig:network}
\end{figure}

Figs.~\ref{fig:geomean-rate} and \ref{fig:Rcdf} compare the proposed distributed MIMO schemes\footnote{The  NUM problem for scenario (ii) is a simple extension of (\ref{NUM-first}).} against network-optimized cellular transmission \cite{BetBur14a} and max-SINR based association.
Fig. \ref{fig:geomean-rate} shows the user-rate geometric mean for each scheme and  each operation scenario considered. 
As the figure reveals,  unique-association  (i.e., the $\{\tilde{x}_{k\Cc}\}$) yield almost the same performance as the optimal solution, verifying our conclusion that the number of users served by multiple clusters per architecture is limited.  Also, the proposed greedy VQ based scheduler performs within  90\%  of the NUM optimal value.  More importantly, it significantly outperforms network-optimized cellular operation under both scenarios. Note that though the orthogonal resource allocation with optimal user association in cellular case performs better than the shared resource allocation in our setting, which operation scenario is better highly depends on the system parameters (e.g., channel mode, transmit power and BS density).

\begin{figure}
\centering
\includegraphics[width=7.6cm,height=6.8cm]{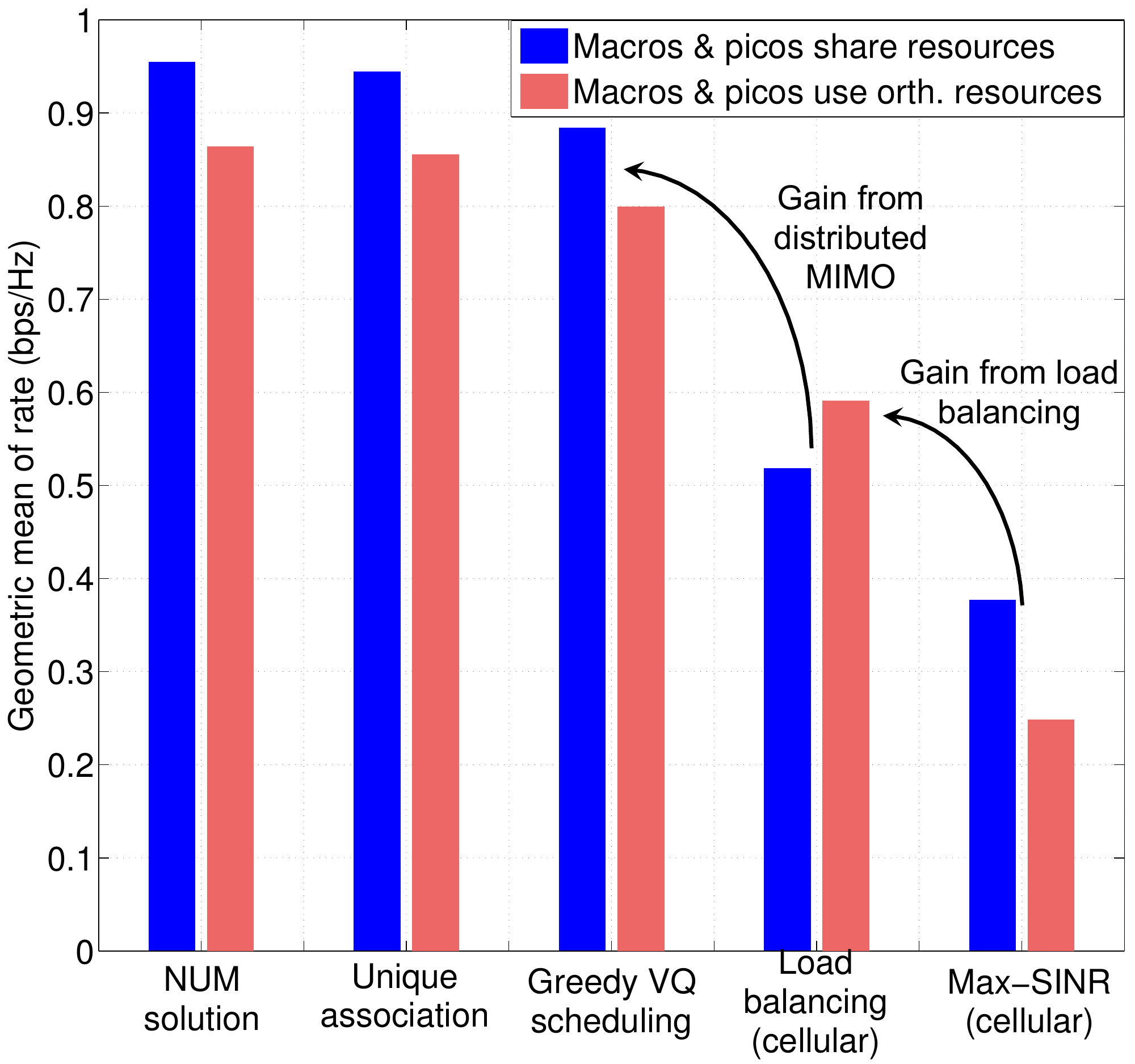}
\caption{User-rate geometric means for various schemes under two distinct macro-pico operational scenarios.}
\label{fig:geomean-rate}
\end{figure}

Fig. \ref{fig:Rcdf} shows the corresponding user-rate cumulative distribution functions.  As the figure shows,  the proposed distributed MIMO schemes yield about a  $2\times$ gain in 5th percentile rates with respect to the optimal cellular scheme \cite{BetBur14a}, under both macro-pico operation scenarios.
 \begin{figure}
 \centering
		\subfigure[Macro and pico BSs share resources]{
			\label{fig:Rcdf-share}
			\includegraphics[width=7.6cm, height=6.8cm]{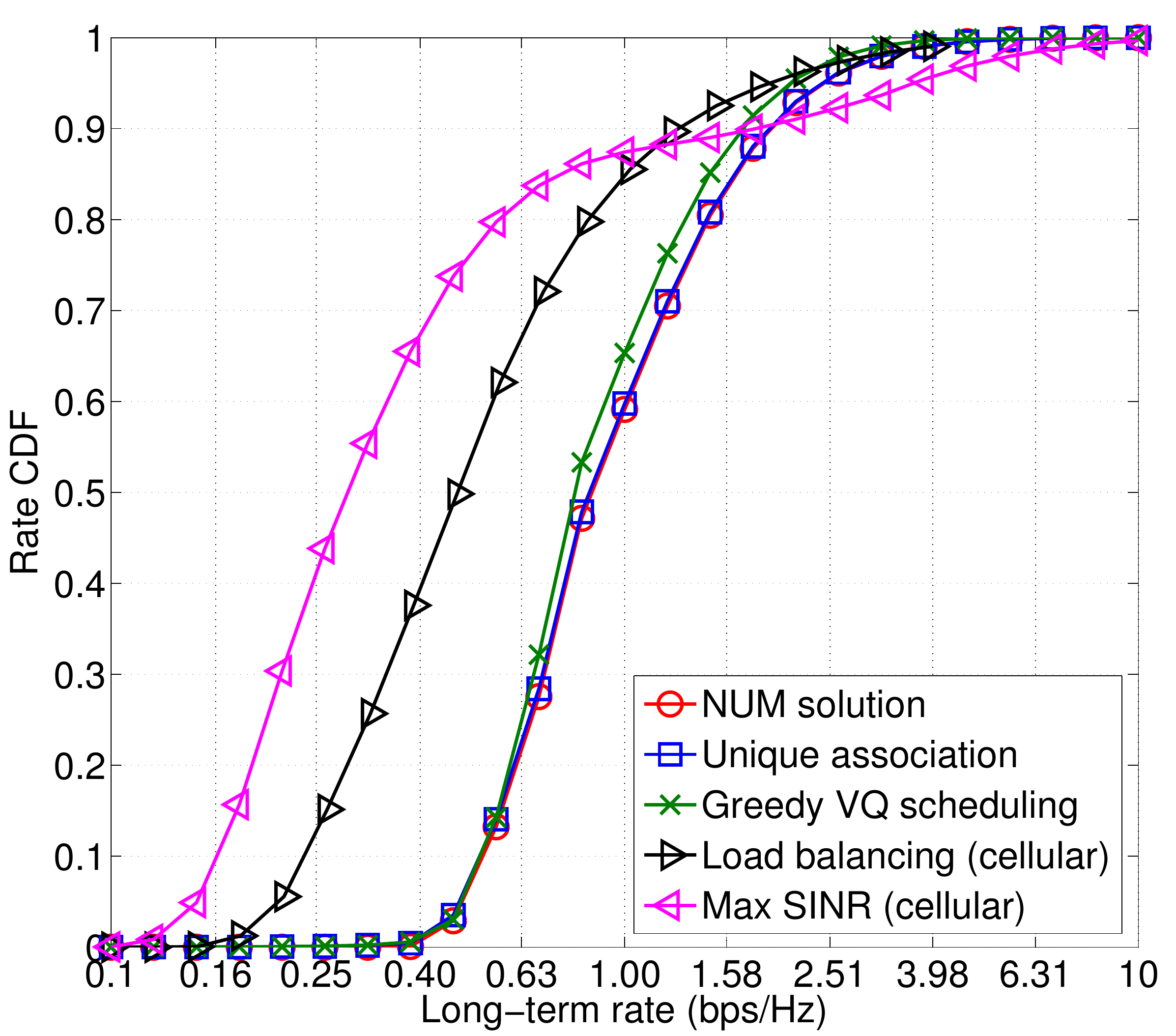}}
		\subfigure[Macro and pico BSs use disjoint sets of resources]{
			\label{fig:Rcdf-orth}
			\includegraphics[width=7.6cm, height=6.8cm]{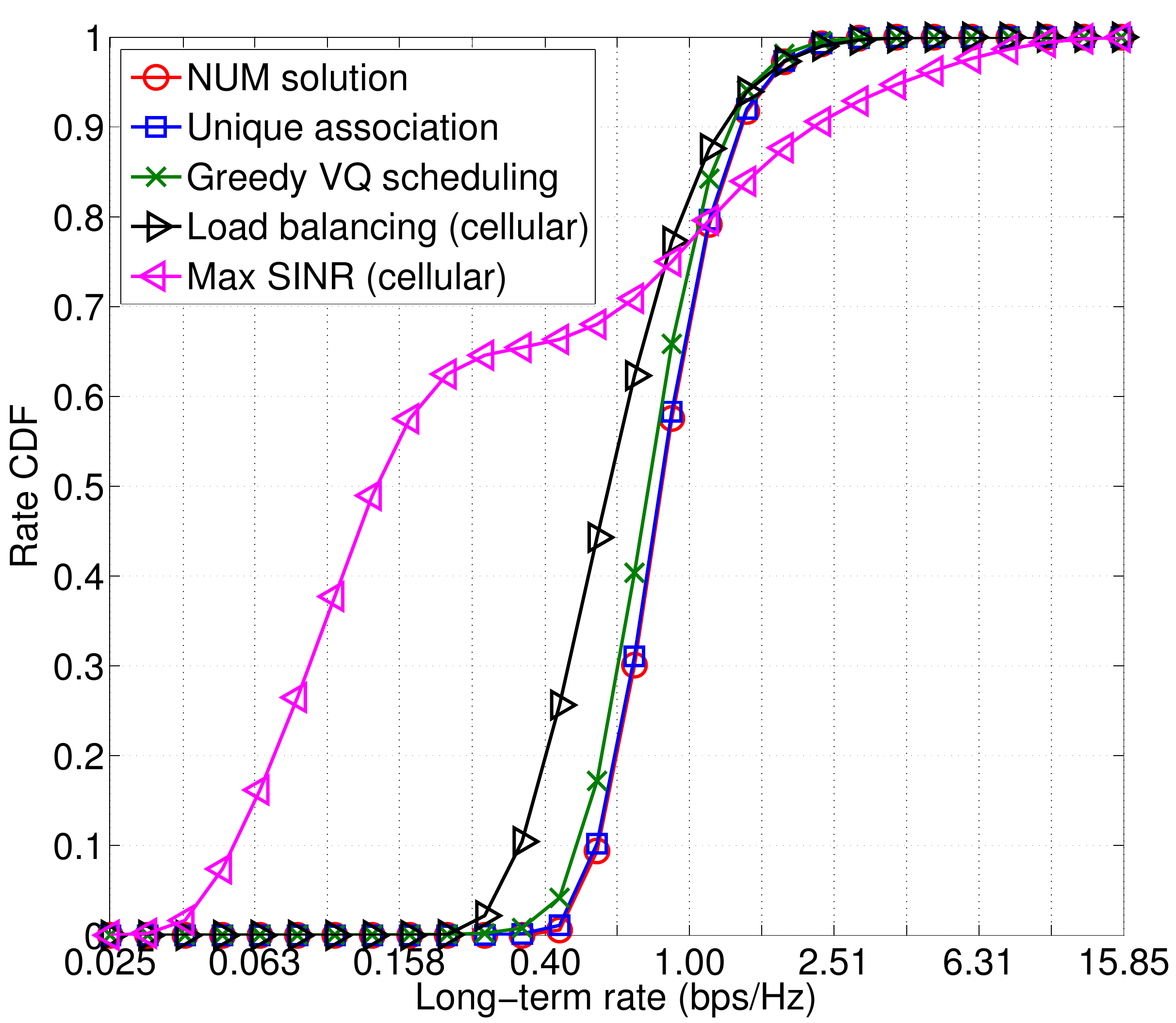}}
 \caption{User-rate CDFs for various schemes.}
\label{fig:Rcdf}
\end{figure}

\section{Conclusion}\label{sec:conclusion}
We present techniques for harmonized use of cellular and CoMP transmission over massive MIMO HetNets. The techniques rely on using a class of distributed MIMO transmission schemes, which do not require CSI exchanges among BSs, and can enable flexible (user-specific) CoMP transmission.   We use properties of the distributed MIMO schemes in the massive MIMO regime to formulate resource allocation as a convex NUM problem, and present scheduling policies whose goal is to approximate the resulting optimized resource allocations. 
As our simulations show,  the proposed operation offers significantly performance gains  with respect to the  network-optimized cellular-only massive MIMO operation \cite{BetBur14a}, especially at the cell edge. More dynamic settings (e.g., users with high mobility) are left for future work. The investigation of other simulation settings (e.g., different $\rho_C$) is also of interest.


\bibliographystyle{ieeetr}
\bibliography{massivemimobib}

\begin{thebibliography}{10}

\bibitem{And7ways13}
J.~G. Andrews, ``Seven ways that {HetNets} are a cellular paradigm shift,''
  {\em IEEE Comm. Mag.}, vol.~51, pp.~136--144, Mar. 2013.

\bibitem{AndSin14}
J.~G. Andrews, S.~Singh, Q.~Ye, X.~Lin, and H.~S. Dhillon, ``An overview of
  load balancing in {HetNets}: Old myths and open problems,'' {\em IEEE
  Wireless Communications}, vol.~21, pp.~18--25, Apr. 2014.

\bibitem{SinAnd12}
S.~Singh, H.~S. Dhillon, and J.~G. Andrews, ``Offloading in heterogeneous
  networks: Modeling, analysis and design insights,'' {\em IEEE Trans. on
  Wireless Communications}, vol.~12, pp.~2484--2497, May 2013.

\bibitem{AryKes13}
E.~Aryafar, A.~Keshavarz-Haddad, M.~Wang, and M.~Chiang, ``{RAT} selection
  games in {HetNets},'' in {\em Proc., IEEE INFOCOM}, pp.~998--1006, Apr. 2013.

\bibitem{DamMon11}
A.~Damnjanovic, J.~Montojo, Y.~Wei, T.~Ji, T.~Luo, M.~Vajapeyam, T.~Yoo,
  O.~Song, and D.~Malladi, ``A survey on {3GPP} heterogeneous networks,'' {\em
  IEEE Wireless Communications}, vol.~18, pp.~10--21, June 2011.

\bibitem{YeAndLB13}
Q.~Ye, B.~Rong, Y.~Chen, M.~Al-Shalash, C.~Caramanis, and J.~Andrews, ``User
  association for load balancing in heterogeneous cellular networks,'' {\em
  IEEE Trans. on Wireless Communications}, vol.~12, pp.~2706--2716, June 2013.

\bibitem{YeAndABS13}
Q.~Ye, M.~Al-Shalash, C.~Caramanis, and J.~G. Andrews, ``On/off macrocells and
  load balancing in heterogeneous cellular networks,'' {\em Proc., IEEE
  Globecom}, Dec. 2013.

\bibitem{BedAgr13}
A.~Bedekar and R.~Agrawal, ``Optimal muting and load balancing for {eICIC},''
  in {\em International Symposium on Modeling \& Optimization in Mobile, Ad Hoc
  \& Wireless Networks (WiOpt)}, pp.~280--287, May 2013.

\bibitem{BetBur14a}
D.~Bethanabhotla, O.~Y. Bursalioglu, H.~C. Papadopoulos, and G.~Caire,
  ``Optimal user-cell association for massive {MIMO} wireless networks,'' {\em
  submitted to IEEE Trans. on Wireless Communications}, 2014. Available at
  arXiv: http://arxiv.org/abs/1407.6731.

\bibitem{Mar10}
T.~L. Marzetta, ``Noncooperative cellular wireless with unlimited numbers of
  base station antennas,'' {\em IEEE Trans. on Wireless Communications},
  vol.~9, pp.~3590--3600, Nov. 2010.

\bibitem{HuhCai11}
H.~Huh, G.~Caire, H.~C. Papadopoulos, and S.~A. Ramprashad, ``Achieving large
  spectral efficiency with {TDD} and not-so-many base-station antennas,'' in
  {\em IEEE-APS Topical Conference on Antennas and Propagation in Wireless
  Communications (APWC)}, pp.~1346--1349, Sep. 2011.

\bibitem{HoyTen13}
J.~Hoydis, S.~Ten~Brink, M.~Debbah, {\em et~al.}, ``Massive {MIMO} in the
  {UL/DL} of cellular networks: How many antennas do we need?,'' {\em IEEE
  Journal on Sel. Areas in Communications}, vol.~31, pp.~160--171, Feb. 2013.

\bibitem{LarEdf14}
E.~Larsson, O.~Edfors, F.~Tufvesson, and T.~Marzetta, ``Massive {MIMO} for next
  generation wireless systems,'' {\em IEEE Comm. Mag.}, vol.~52, pp.~186--195,
  Feb. 2014.

\bibitem{caire2010multiuser}
G.~Caire, N.~Jindal, M.~Kobayashi, and N.~Ravindran, ``{Multiuser MIMO
  achievable rates with downlink training and channel state feedback},'' {\em
  IEEE Trans. on Info. Theory}, vol.~56, pp.~2845--2866, June 2010.

\bibitem{HuhTul12}
H.~Huh, A.~M. Tulino, and G.~Caire, ``Network {MIMO} with linear zero-forcing
  beamforming: Large system analysis, impact of channel estimation, and
  reduced-complexity scheduling,'' {\em IEEE Trans. on Info. Theory}, vol.~58,
  pp.~2911--2934, May 2012.

\bibitem{LimCha13}
Y.-G. Lim, C.-B. Chae, and G.~Caire, ``Performance analysis of massive {MIMO}
  for cell-boundary users,'' {\em arXiv preprint arXiv:1309.7817}, Sep. 2013.

\bibitem{ShiCai10}
H.~Shirani-Mehr, G.~Caire, and M.~J. Neely, ``{MIMO} downlink scheduling with
  non-perfect channel state knowledge,'' {\em IEEE Trans. on Communications},
  vol.~58, pp.~2055--2066, July 2010.

\bibitem{WonPan08}
K.-K. Wong and Z.~Pan, ``Array gain and diversity order of multiuser {MISO}
  antenna systems,'' {\em International Journal of Wireless Information
  Networks}, vol.~15, pp.~82--89, June 2008.

\end{thebibliography}


\appendices

\section{Proof of Spectral Efficiency Using ZF Precoding}\label{pf:theo-zf}
In this paper, we assume that each BS has perfect CSI.  Techniques in \cite{HuhTul12,LimCha13} can be applied to derive the results. For completeness, we provide the proof as follows. 
We use  $\mathbb{E}\left[\frac{S}{I+N}\right] \approx \frac{\mathbb{E}\left[S\right]}{\mathbb{E}\left[I+N\right]}$  to approximate SINR in the calculation of ergodic spectral efficiency in the massive MIMO regime, which is shown to be quite close to the exact asymptotic spectral efficiency \cite{LimCha13}.

Adopting the ZF precoding, the precoding matrix at BS $j$ is $\mb{F}_j=\mb{G}_j\left(\mb{G}_j^H\mb{G}_j\right)^{-1}\mb{A}_j^{1/2}$, where $\mb{A}_j$ is the normalizing coefficients matrix. In this case, the intra-cell interference is $0$.  Denoting the $k$th diagonal element of $\mb{A}_j$ by $a_{kj}$ and plugging the precoding matrix $\mb{F}_j=\mb{G}_j\left(\mb{G}_j^H\mb{G}_j\right)^{-1}\mb{A}_j^{1/2}$ into received signal, the SINR at user $k$ from $\Cc$ is
\begin{equation}\label{eq:sinrexact}
\begin{aligned}
&\sinr_{k\Cc}\\
=  &\frac{\sum_{j\in \Cc} \sum_{l\in \Cc} \sqrt{\frac{P_{j}P_{l}a_{kj}a_{kl}}{ S_{j}(|\Cc|)  S_{l}(|\Cc|)} }}{\sigma^2 + \left\| \sum_{l\notin \Cc} \sum_{u\in \cup_{(\mathcal{C'}: l\in \mathcal{C'})} \Sc_\mathcal{C'}}  \sqrt{\frac{P_l}{ S_{l}(\mathcal{C'})}} \mb{g}_{kl}^H \mb{f}_{ul}s_u\right\|^2},
\end{aligned}
\end{equation}
where $\mathcal{C'}$ denotes the cluster including $l$ that is different from $\Cc$.
Using similar techniques in the proof of Theorem III-1 in~\cite{WonPan08}, we can show  $\frac{a_{kj}}{S_{j}(|\Cc|)}\rightarrow \beta_{kj}\frac{\left(M_j - S_{j}(|\Cc|) +1\right)}{S_{j}(|\Cc|)}$, as $M_j\rightarrow \infty$ with fixed ratio $S_{j}(|\Cc|)/M_j\leq 1$. Then we have
$\sum_{j\in \Cc} \sum_{l\in \Cc} \sqrt{\frac{P_{j}P_{l}a_{kj}a_{kl}}{S_{j}(|\Cc|)S_{l}(|\Cc|)} } \rightarrow \sum_{j\in \Cc} \sum_{l\in \Cc} \sqrt{P_{j}P_{l}\beta_{kj}\beta_{kl}b_jb_l }$, where $b_j=\frac{M_{j} - S_{j}(|\Cc|) +1}{S_{j}(|\Cc|)}$. As for the interference, we have
\begin{equation*}
\begin{aligned}
&\left\| \sum_{l\notin \Cc} \sum_{u\in \cup_{(\mathcal{C'}: l\in \mathcal{C'})} \Sc_\mathcal{C'}}   \sqrt{\frac{P_l}{S_{l}(|\Cc|)}} \mb{g}_{kl}^H \mb{f}_{ul}s_u\right\|^2 \\
=&\sum_{l\notin \Cc} \sum_{u\in \cup_{(\mathcal{C'}: l\in \mathcal{C'})} \Sc_\mathcal{C'}}  \left\|   \sqrt{\frac{P_l}{S_{l}(|\Cc|)}} \mb{g}_{kl}^H \mb{f}_{ul}\right\|^2\\
\rightarrow &\sum_{l\notin \Cc} P_l \beta_{kl},
\end{aligned}
\end{equation*}
where the last step follows from that channels and precoders of different users are independent. Based on the approximation $\mathbb{E}\left[\frac{S}{I+N}\right] \approx \frac{\mathbb{E}\left[S\right]}{\mathbb{E}\left[I+N\right]}$, we complete the proof by plugging the above results into~(\ref{eq:sinrexact}).


\section{Proof of  Spectral Efficiency Using MRT Precoding}\label{pf:theo-mrt}
We first give the following properties of MRT in the massive MIMO regime.  

\noindent 1) We have $\|\mb{g}_{kj}\|^2=\mb{g}_{kj}^H\mb{g}_{kj} = \beta_{kj} \sum_{i=1}^{M_j} h_{kj,i}^*h_{kj,i}$. Recalling that $h_{kj,i}$ are i.i.d. Gaussian,  we have $\frac{1}{M_j}\|\mb{g}_{kj}\|^2 \rightarrow \beta_{kj}\mathbb{E}[h_{kj,1}^*h_{kj,1}]=\beta_{kj}$, as $M_j$ and $S_{j}(|\Cc|)$ become large with a fixed ratio $S_{j}(|\Cc|) /M_j\leq 1$.

\noindent 2) Plugging $\mb{f}_{kj}$, we have $\left|\mb{g}_{kj}^H\mb{f}_{jn}\right|^2=  \left|\mb{g}_{kj}^H \frac{\mb{g}_{nj}}{\|\mb{g}_{nj}\|}\right|^2=\left|\frac{\sqrt{\beta_{kj}\beta_{nj}}}{\|\mb{g}_{nj}\|}\sum_{i=1}^{M_j}h_{kj,i}^*h_{nj,i} \right|^2$, which converges to $\frac{\beta_{kj}\beta_{nj}}{\frac{1}{M_j}\|\mb{g}_{nj}\|^2}\mathbb{E}\left[|h_{kj,1}^*h_{nj,1}|^2\right] +M_j(M_j-1)\mathbb{E}\left[h_{kj,1}^*h_{jn,1}h_{kj,2}^*h_{nj,2} \right]=\beta_{kj}$ as $M_j\rightarrow\infty$, since $h_{kj,i}$ and $h_{nj,i}$ are i.i.d. Gaussian for $n\neq k$.

Using the above two properties and similar techniques in Appendix \ref{pf:theo-zf}, we have
\begin{equation}\label{eq:sinr-mrt}
\begin{aligned}
&\sinr_{k\Cc} \\
\approx & \frac{\left(\sum_{j\in\Cc} \sqrt{\frac{P_{j}}{S_{j}(|\Cc|) }} \|\mb{g}_{kj}\|  \right)^2 }{1 + \sum_{j\in\Cc} (S_{j}(|\Cc|) -1)\frac{P_j}{S_{j}(|\Cc|) }\beta_{kj} + \sum_{l\notin \Cc} P_l\beta_{kl}}\\
= & \frac{\sum_{j\in\Cc}\sum_{l\in\Cc} \sqrt{\frac{P_jP_lM_jM_l\beta_{kj}\beta_{kl}}{S_{jC}S_{lC}}} }{1 + \sum_{j\in\Cc} (S_{j}(|\Cc|) -1)\frac{P_j}{S_{j}(|\Cc|)}\beta_{kj} + \sum_{l\notin \Cc} P_l\beta_{kl}}.
\end{aligned}
\end{equation}
Plugging (\ref{eq:sinr-mrt}) into $\log_2(1+\sinr_k)$, we complete the proof. 

\section{Proof of Proposition \ref{prop:uniqueass}}\label{pf:prop-uniqueass}
We use the techniques similar to the proof of Proposition~3 in \cite{YeAndABS13}, where a graph is used to represent the association, and KKT conditions (\ref{eq:kkt-1})  restrict the structure of the graph.

We denote the graph by $G_1$, whose nodes represent the users, and the edge between two nodes represents the BS cluster that serves the two users in the considered architecture. Each node has an ID indicating the user index, while each edge has a color that identifies the BS cluster.  For example, Fig. \ref{fig:associate3} shows that user $k$ is served by both clusters $\Cc_1$ and $\Cc_2$, and user $m$ is served by both clusters $\Cc_1$ and $\Cc_3$. 

\begin{figure}
\centering
\includegraphics[width=4cm, height=4cm]{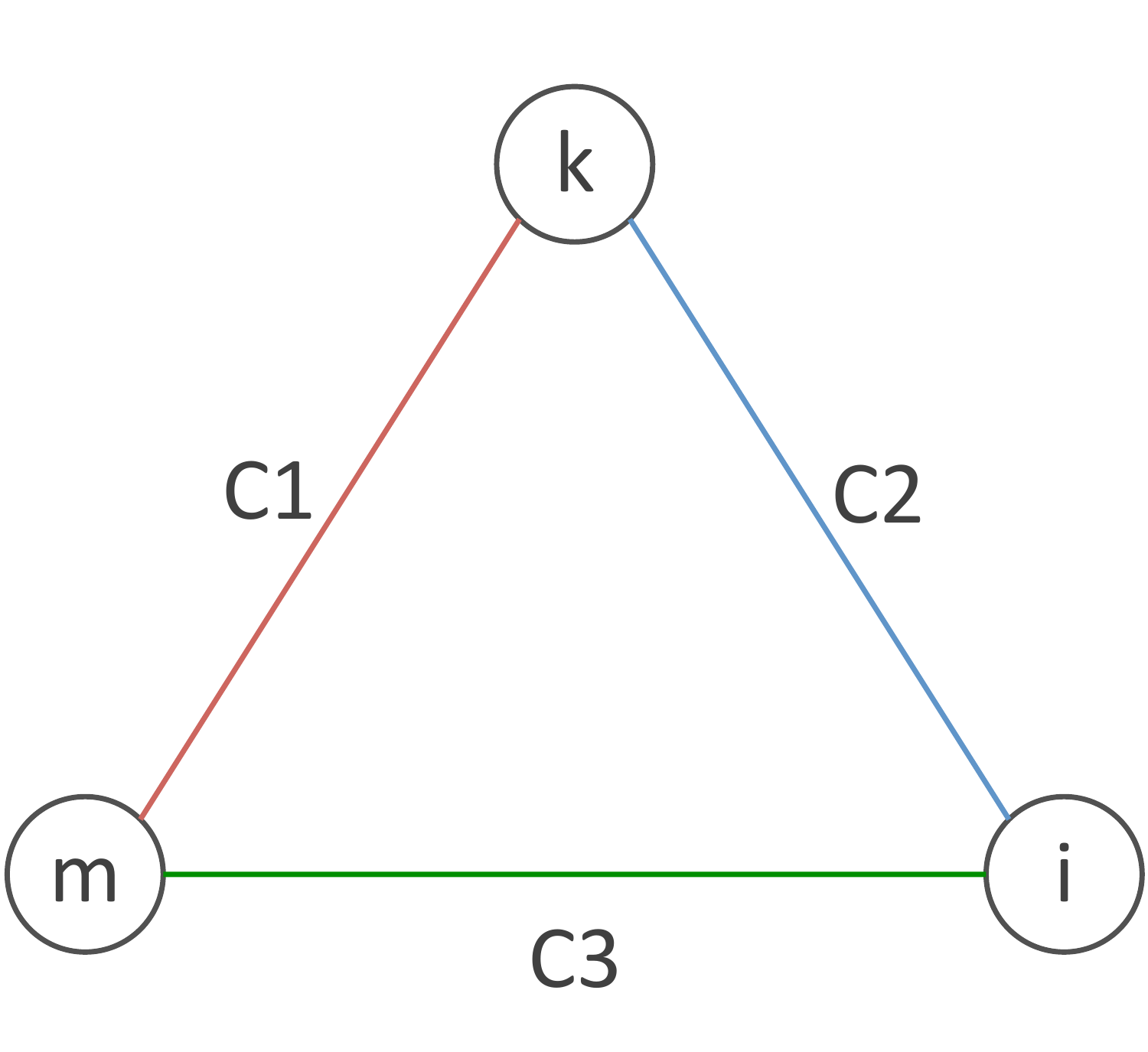}
\caption{The graph representation of the associations of three users.}
\label{fig:associate3}
\end{figure}

In a heavily loaded network, where the constraints (\ref{ct-ue-NUM-first}) are inactive (i.e., $\sum_{\Cc:\, |\Cc|= \Csize} x_{k\Cc}< \lambda_\Csize$) in the optimal solutions, we have $\mu_{k\Csize}=0$ for all $k$. 
If there are two users $k$ and $m$ being served by size-$\Csize$ clusters $\Cc_1$ and $\Cc_2$  (i.e., $x_{k\Cc_1}>0$, $x_{k\Cc_2}>0$, $x_{m\Cc_1}>0$, $x_{m\Cc_2}>0$),  we have  $R_k= \frac{r_{k\Cc_1}}{ \sum_{j\in \Cc_1}\nu_{j\Csize}} =\frac{r_{k\Cc_2}}{ \sum_{j\in \Cc_2}\nu_{j\Csize}} $ and $R_m= \frac{r_{m\Cc_1}}{ \sum_{j\in C_1}\nu_{j\Csize}} =\frac{r_{m\Cc_2}}{ \sum_{j\in \Cc_2}\nu_{j\Csize}} $ from KKT condition (\ref{eq:kkt-1}), where $R_k= \sum_{\Cc} x_{k\Cc} r_{k\Cc}$. Thus,  we have
\begin{equation}\label{eq:kkt-2user}
\frac{r_{k\Cc_1}}{r_{k\Cc_2}}=\frac{r_{m\Cc_1}}{r_{m\Cc_2}},
\end{equation}
which is true with probability  $0$.  Therefore, it is almost sure that any two users can share at most one same cluster in each architecture. Similarly, we consider an example of three users $k, m, i$ and clusters $\Cc_1, \Cc_2, \Cc_3$ as illustrated in Fig. \ref{fig:associate3}. We consider the following three cases:

\noindent 1) If clusters $\Cc_1$, $\Cc_2$ and $\Cc_3$ are different, we have
\begin{equation}
\begin{aligned}
\frac{r_{k\Cc_1}}{r_{k\Cc_2}}&=\frac{\sum_{j\in \Cc_1}\nu_{j\Csize}}{\sum_{j\in \Cc_3}\nu_{j\Csize}} \frac{\sum_{j\in \Cc_3}\nu_{j\Csize}}{\sum_{j\in \Cc_2}\nu_{j\Csize}} \\
&=\frac{r_{m\Cc_1}}{r_{m\Cc_3}}\frac{r_{i\Cc_3}}{r_{i\Cc_2}},
\end{aligned}
\end{equation}
which is true with probability $0$.

\noindent 2) If $\Cc_1=\Cc_2\neq \Cc_3$, we have that users $m$ and $i$ are served both by clusters $\Cc_1$ and $\Cc_3$, which is true with probability $0$ from (\ref{eq:kkt-2user}).

\noindent 3) If $\Cc_1=\Cc_2=\Cc_3$, we have that users $k$, $m$ and  $i$ are served by the same cluster, which is possible. In this case, the graph  becomes a \emph{complete graph}.

Therefore, the graph $G_1$ with three users either contains a loop with the same color edges or no loop. We can get a similar result for graph $G_1$ with more than three users, where the users served by the same BS cluster constitute a complete graph. Thus, we generate a new graph, denoted by $G_2$, where the node represents a cluster. There is an edge between two nodes in $G_2$, if these two nodes (i.e., clusters) have a common vertex in $G_1$ (i.e., there is at least one user served  by both these two clusters). Thus the number of users who are served by more than one cluster is limited by the edge of $G_2$. Note that there are $N_\Csize$ nodes and no loop in $G_2$. Thus, $G_2$ is a tree, which has the  maximal number of edges being one less than the number of nodes (i.e., $N_\Csize-1$). Therefore, the number of users served by multiple BS clusters equals the number of edges in graph $G_2$, which is no more than $N_\Csize-1$.  

\end{document}